\title[Incentives to Invite Others to Form Larger Coalitions]{Incentives to Invite Others to Form Larger Coalitions}
\author{Yao Zhang}
\affiliation{
  \institution{ShanghaiTech University}
  \city{Shanghai}
  \country{China}}
\email{zhangyao1@shanghaitech.edu.cn}
\author{Dengji Zhao}
\affiliation{
  \institution{ShanghaiTech University}
  \city{Shanghai}
  \country{China}}
\email{zhaodj@shanghaitech.edu.cn}
\begin{abstract}
We study a cooperative game setting where the grand coalition may change since the initial players can invite more players. We focus on monotone games, i.e., adding more players to the grand coalition is not harmful. We model the invitation relationship as a directed acyclic graph. Our goal is to design a reward distribution mechanism for this new cooperative game setting such that players are incentivized to invite new players. In this paper, we propose the weighted permission Shapley value (inspired by permission structure and the weighted Shapley value) to achieve the goal. Our solution offers the very first attempt to incentivize players to invite more players to form a larger coalition via their private invitations in cooperative settings.
\end{abstract}
\keywords{Mechanism design, Cooperative games, Diffusion incentives}
\newcommand{\BibTeX}{\rm B\kern-.05em{\sc i\kern-.025em b}\kern-.08em\TeX}
\begin{document}


\pagestyle{fancy}
\fancyhead{}


\maketitle 


\section{Introduction}
Cooperative game is a classical research topic studied in game theory~\cite{nisan2007algorithmic,narahari2014game}. The study focused on games where a fixed set of players forms coalitions to share rewards. A common assumption behind the game is that more players collaborating create a higher reward and the goal is to incentivize all players to collaborate. Then one immediate question is how we can gather all the players initially, which is not tackled in traditional settings. With the rapid development of Internet and social media, people now can remotely form groups to collaborate on projects via, for example, crowdsourcing platforms (e.g. Amazon Mechanical Turk) and Q\&A platforms (e.g. Stack Overflow). However, the players in these platforms are relatively independent and the tasks assigned to them are also independent. Thus the incentives for the players to actively participate in these platforms are not strong.

In this paper, we want to design a mechanism where existing participants are motivated to invite others to join a project via their private social connections. This kind of mechanism is highly demanded in practice. For instance, when a research team wants to collect a large scale dataset, they may propagate the tasks via social networks to seek more data providers~\cite{singla2015information,jeong2013crowd}. When a company wants to select a set of users for a product trial, it may diffuse the recruitment via social networks to find the target users. Again, we may use social networks to find a missing person quickly. Among them, the most famous red balloon challenge hosted by DARPA is a well-known example where collaboration formed via social networks played an essential role~\cite{pickard2011time}.

Similar to the classical cooperative games, the key challenge here is also to design a reward distribution mechanism for all players in the coalition. The difference is that in our setting players are connected and a player cannot join a coalition without the invitation from her neighbours (simply because the player is not aware of the collaboration without the others' invitation). \emph{More precisely, our challenge is to design a reward distribution mechanism such that the current players of the coalition are incentivized to invite their neighbours to join the coalition.}

The Shapley value is a classical solution concept in cooperative games to distribute rewards~\cite{shapley1953value}. However, in the calculation of the Shapley value, all the players are symmetric, which is not the case in our setting because some players are invited by the others and they cannot be treated equally. Therefore, it is easy to show that the Shapley value cannot be directly applied here to incentivize players to invite others.

To combat this problem, we combine the concepts of permission structure and weighted Shapley value for the first time to build our solution. The weighted Shapley value is the very first concept that applies asymmetry to cooperative games~\cite{chun1991symmetric}. Kalai and Samet~\cite{kalai1987weighted} offered the idea of the weight system and an axiomatic characterization of the weighted Shapley values. A recent understanding of the weighted Shapley value is illustrated in~\cite{radzik2012new}, which gave new axioms and provided simpler formula for weighted Shapley value. However, the asymmetry introduced by the weighted Shapley value cannot reflect the structure of the invitations among the players in our setting.

The concept of permission structure seems a closer solution to our problem, which was first introduced in~\cite{gilles1992games} and~\cite{gilles1999cooperative}. In the permission structure, players need permissions from other players before they are allowed to cooperate, where the permissions are very similar to the players' invitations in our model. In our model, a player is not aware of the game without someone's invitation. Gilles et al.~\cite{gilles1992games,gilles1999cooperative} considered the cases where each player has to get permissions from all or at least one of her superiors, which is called the conjunctive and disjunctive approach respectively. The axiomatic characterization of the Shapley value under these two approaches is given in~\cite{van1996axiomatizations} and~\cite{van1997axiomatization}. 

Against this background, our solution is a novel combination of the weighted Shapley value and the permission structure. We use permission structure to represent the priorities between an inviter and invitee. We further assign different weights to the players to control the importance of their priorities. The well-known winning solution for the DARPA red balloon challenge is a special case of our solution~\cite{pickard2011time}. We expect that our solution will have very promising applications for task allocations via social networks such as crowdsourcing and question answering. Our contributions advance the state of the art in the following ways:
\begin{itemize}
\item We formally model the (social) connections between players in a cooperative game and, for the first time, define the concept of \emph{diffusion incentive compatibility (DIC)} for players to utilize their connections to gather more players.
\item We define a weighted permission Shapley value as a reward distribution mechanism to achieve DIC.
\item We also formally model the query network as a diffusion cooperative game and show the only solution to it is the weighted permission Shapley value, which for the first time explains in theory why the winning solution of DARPA red balloon challenge worked and extends their solution to DAGs.
\end{itemize}

There are some interesting related works investigating information diffusion on social networks in non-cooperative settings~\cite{emek2011mechanisms,li2017mechanism,zhao2018selling,sinha2019incentivizing}. Emek et al.~\cite{emek2011mechanisms} firstly investigated the setting of multi-level marketing, where players promote a sale of products to their neighbours in a tree. For incentivizing players to invite friends to an auction, Li et al.~\cite{li2017mechanism} and Zhao et al.~\cite{zhao2018selling} proposed the very first mechanisms. Our model shares a similar motivation, but cannot be handled with their techniques as they focused on the non-cooperative setting and only a few players can get reward.

The remainder of the paper is organized as follows. Section 2 gives a formal description of the model. Sections 3 establishes the family of weighted permission Shapley value to incentivize diffusion in forests and Section 4 demonstrates its applicability in query networks. Section 5 extends the result to general DAGs.

\section{The Model}
We study a cooperative game where players are connected to form a network and not all players are aware of the game initially. In real-world applications, their connections can represent friendship or leadership. A person who is in the coalition can invite her friends who are not in the coalition yet to join. Without the invitation, these friends will not be informed of the coalition. We investigate the reward distribution mechanism in this setting to incentivize existing players to invite new players to join the coalition.

Formally, let $N = \{1,2,\dots,n\}$ be the set of all connected players in the underlying network. We model the network as a directed acyclic graph (DAG) $G = (N, E)$. Each edge $e = (x,y) \in E$ indicates that player $x$ can invite $y$. There is a special player set $\mathcal{I}\subseteq N$ who are in the coalition initially without invitation. We call $\mathcal{I}$ the initial set and the invitation has to start from the initial players. For each player $i\in N$, let $p_i = \{ j \mid (j,i)\in E \}$. We have $p_i = \emptyset$ if and only if $i \in \mathcal{I}$. We may assume w.l.o.g. that all players can be reached from at least one of the initial players in the underlying network. Let $\theta_i = \{ j \mid (i,j)\in E \}$ be private type of player $i$, which is the set of players who can be invited by $i$. Since the mechanism does not know the real underlying network, the invitation process can be modelled as each player $i$ reporting her own type. Let $\theta_i'\subseteq \theta_i$ is the report type of player $i$, i.e., the actual player set invited by $i$. Given any report profile $\theta' = (\theta_1',\dots, \theta_n')$, there is a directed graph induced by $\theta'$ denoted by $G(\theta') = (N, E(\theta'))$, where $E(\theta')= \{(i,j) \mid i\in N, j\in \theta_i'\}$. Let $J_{\mathcal{I}}(G(\theta'))$ be the set of players who can be reached from at least one player in $\mathcal{I}$ in $G(\theta')$. It is clear that only the players in $J_{\mathcal{I}}(G(\theta'))$ can actually join the game, because the others cannot receive the proper invitation started from the initial players (in practice, this means that the others will not be informed about the game at all).

There is a non-negative and monotone characteristic function $v:2^N\rightarrow \mathbb{R}$, s.t., $v(\emptyset) = 0$ and $v(S) \leq v(T)$ for all $S\subseteq T \subseteq N$. The monotone property is necessary in our setting; otherwise, there is no need to invite more people to join the coalition if fewer people can do better. Note that the definition of $v$ does not consider the connections between players, simply because the connections are their private information, which is what we want to discover. Let $\Theta(N)$ and $\mathcal{V}(N)$ be the space of all type profiles and all characteristic functions satisfying our setting respectively. We define the reward distribution mechanism as follows.

\begin{definition}
A \textbf{reward distribution mechanism} $\mathcal{M}$ is defined by a reward policy $\Phi = \{\phi_i\}_{i\in N}$, where each $\phi_i:\Theta(N) \times \mathcal{V}(N) \rightarrow \mathbb{R}$ assigns the reward to player $i \in N$. Moreover, for all $\theta'\in \Theta(N)$ and all $v\in \mathcal{V}(N)$, $\phi_i(\theta', v) = 0$ if $i\notin J_{\mathcal{I}}(G(\theta'))$, where $\mathcal{I}$ is the initial set.
\end{definition}

A desirable property of the reward distribution mechanism is to distribute exactly what the coalition of all participated players can generate. This is called \emph{efficiency}.

\begin{definition}
A reward distribution mechanism $\mathcal{M}$ is \textbf{efficient} if for all $\theta' \in \Theta(N)$ and all $v\in \mathcal{V}(N)$, we have 
\[ \sum_{i\in N} \phi_i(\theta', v) = v\left( J_{\mathcal{I}}(G(\theta')) \right) \]
\end{definition}

Other than efficiency, we also want to incentivize all players who are already in the coalition to invite all their neighbours to join the coalition. This is the key property we want to achieve here, which requires that inviting all neighbours is a dominant strategy for all players. We call this property \emph{diffusion incentive compatibility}. Let $\Theta_i$ be the type space of player $i$, $\theta_{-i}$ be the report profile of players other than $i$, and $\Theta_{-i}$ be the type space of players other than $i$. We define diffusion incentive compatibility as follows.

\begin{definition}
A reward distribution mechanism $\mathcal{M}$ is \textbf{diffusion incentive compatible (DIC)} if for all $i\in N$ with type $\theta_i\in \Theta_i$, all $\theta_i'\subseteq \theta_i$, all $\theta_{-i}'\in \Theta_{-i}$ and all $v\in \mathcal{V}(N)$, we have
\[ \phi_i((\theta_i, \theta_{-i}'), v) \geq \phi_i((\theta_i', \theta_{-i}'), v) \]
\end{definition}

Finally, we consider a property of structural fairness that guarantees a player can gain reward at least as much as a fixed proportion of the reward gained by her invitees. Since the game is monotone, it is reasonable to give a promise of fairness compared to neighbours to all players before their invitations.

\begin{definition}
A reward distribution mechanism $\mathcal{M}$ has the property of \textbf{$\gamma$-structural fairness ($\gamma$-SF)} if for all $v\in \mathcal{V}(N)$, all $\theta'\in \Theta(N)$ and all $i,j\in J_{\mathcal{I}}(G(\theta'))$ with $j\in \theta_i'$, we have $\phi_i(\theta', v) \geq \gamma \phi_j(\theta', v)$.
\end{definition}

\section{Diffusion Incentives in a Forest}
\label{section:tree}
In this section, we first investigate the solution to satisfy efficiency and diffusion incentive compatibility when the network $G$ is a forest. An instance of the cooperative game in forests is illustrated in Example~\ref{example:1} below.

\begin{example}
\label{example:1}
Consider the case illustrated in Figure~\ref{figure:1}, where $N=\{1,2,3,4\}$. Suppose the initial coalition is $\mathcal{I}=\{1,2\}$. To form a larger coalition, agents 1 and 2 can invite their friends 3 and 4. Suppose the characteristic function $v$ is defined as
\[
v(S) = \mathbb{I}(\{1, 3\}\cap S \neq \emptyset) + \mathbb{I}(4\in S)
\]
for all $S\subseteq N$, where $\mathbb{I}(\cdot)$ is the indicator function.
\begin{figure}[htbp]
\center
\begin{tikzpicture}
\node[circle,inner sep=2.5pt,draw] (A) at (0,0) {$1$};
\node[circle,inner sep=2.5pt,draw] (C) at (0,-1) {$3$};
\node[circle,inner sep=2.5pt,draw] (B) at (2.5,0) {$2$};
\node[circle,inner sep=2.5pt,draw] (D) at (2.5,-1) {$4$};
\draw[->] (A) -> (C);
\draw[->] (B) -> (D);
\end{tikzpicture}
\caption{An example of the cooperative game in a forest.}\label{figure:1}
\end{figure}
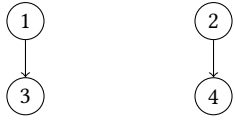
\end{example}

\subsection{Shapley Value does not Work}
The first question is what if we directly apply Shapley value~\cite{shapley1953value,winter2002shapley}, the classical solution for cooperative games, to our setting.
Let $\mathcal{R}(S)$ denote the set of all orders $R$ of players in the coalition $S$. For an order $R$ in $\mathcal{R}(N)$, we denote by $B^{R,i}$ the set of players preceding $i$ in the order $R$. For a given characteristic function $v$ and an order $R$, the marginal contribution of player $i$ in $R$ is $C_i(v,R) = v(B^{R,i}\cup \{i\}) - v(B^{R,i})$. Then the classical Shapley value of game $v$ for $i$ is the expectation of $i$'s marginal contribution:
\[ \varphi_i(v) = \mathbb{E}_{\mathcal{U}_{\mathcal{R}(N)}}(C_i(v,\cdot)) \]
where $\mathcal{U}_{\mathcal{R}(N)}$ is a uniform distribution on $\mathcal{R}(N)$. We first show that Shapley value (on $J_{\mathcal{I}}(G(\theta'))$) does not work in our new setting.

\begin{proposition}
If Shapley value is applied as a reward distribution mechanism $\mathcal{M}$, then $\mathcal{M}$ is not diffusion incentive compatible.
\end{proposition}

\begin{proof}
We prove this proposition by presenting a counterexample. Consider the game in Example~\ref{example:1}. When applying Shapley value as the reward distribution mechanism, we get
\[
\begin{cases}
& \varphi_1 = \varphi_3 = 1/2, \\
& \varphi_2 = 0, \quad \varphi_4 = 1.
\end{cases}
\]
Notice that agents 1 and 3 have same contribution to other agents and they will share the reward of that contribution. However, if agent 1 does not invite agent 3, her Shapley value will be increased to $1 > 1/2$. Hence, directly applying Shapley value is not diffusion incentive compatible.
\end{proof}



\subsection{Permission Structure in Forests}
\label{sec:psf}
To tackle the failure of Shapley value, we recall an important concept called permission structure. 
Gilles et al.~\cite{gilles1992games,gilles1999cooperative} gave the permission restriction on cooperative games. They defined the permission structure on DAGs via the conjunctive and the disjunctive approach, respectively. We will extend the notions on DAGs in Section~\ref{sec:mixed}. Here we only utilize existing work on forests. Intuitively, a permission structure can represent how players get involved in the game by others' invitations. The formal definition is shown as below.

\begin{definition}
A permission structure on $N$ is an asymmetric mapping $P:N\rightarrow 2^N$, i.e., $j\in P(i)$ implies that $i\notin P(j)$.
\end{definition}

Here, we define $P(i)$ as the set of players who invited $i$ into the coalition, i.e., $P(i) = p_i' = \{ j\mid i\in \theta'_j \}$. In particular, in the forest model, every player except the initial players has a unique parent who invites her ($|p_i'|=1$). For instance, in Example~\ref{example:1}, $P(3) = \{1\}$ and $P(4) = \{2\}$. Then, we can define the autonomous coalition.

\begin{definition}
A coalition $S\subseteq N$ is autonomous in a permission structure $P$ if for all $i\in S$, $P(i)\subseteq S$.
\end{definition}

A coalition $S\subseteq N$ is autonomous if and only if for each player $i\in S$, all her ancestors are also in $S$. The property of autonomous indicates whether a coalition has a chance to generate rewards by itself. For instance, in Example~\ref{example:1}, $\{1,3\}$ is autonomous while $\{4\}$ is not autonomous. Denote the collection of all autonomous coalitions in permission structure $P$ by $A_P$. Then for an arbitrary coalition $S\subseteq N$, we consider the largest autonomous part of it.

\begin{definition}
Let $P$ be a permission structure on $N$. Then the largest autonomous part of a coalition $S\subseteq N$ is defined by \[ \alpha(S) = \bigcup \{ T \mid T\subseteq S \text{ and } T\in A_{P} \}. \]
\end{definition}

Intuitively, $\alpha(S)$ is the largest subset of $S$ that is autonomous. In particular, in the forest model, let $G_S$ be the subgraph of the forest $G$ formed by players in $S$, then the largest autonomous part of the coalition $S$ is all the connected components of $G_S$ that contains at least one player in the initial set $\mathcal{I}$. For instance, in Example~\ref{example:1}, the largest autonomous part of set $\{1,3,4\}$ is $\{1,3\}$. 




\subsection{Applying Permission Shapley Value}
\label{sec:apppsv}
Taking the structure of the diffusion forest into account, we see that some players need others' participation to create value. For instance, in Example~\ref{example:1}, player 4 can only be invited by player 2. Hence, without player 2, player 4 cannot provide her contribution to the coalition. This suggests applying Shapley value with a permission structure.

Taking the notations in Section~\ref{sec:psf}, with the restriction of the permission structure, we can map the characteristic function $v$ of the diffusion cooperative game to a projection $v^P$ on $P$ as 
\[v^P(S) = v(\alpha(S))\]
for all $S\subseteq N$~\cite{gilles1992games}.
Intuitively, it means that the contribution given by a coalition $S$ is only from those players who can participate in the game in the coalition $S$.


Define the permission Shapley value on characteristic function $v$ as $\varphi^P(v) = \varphi(v^P)$, i.e., the Shapley value on the game $v^P$. Applying the permission Shapley value in Example~\ref{example:1}, we have \[v^P(S) = \mathbb{I}(1\in S) + \mathbb{I}(\{2, 4\}\subseteq S)\] and the reward distributed to players will be
\[
\begin{cases}
& \varphi^P_1 = 1, \quad \varphi^P_3 = 0, \\
& \varphi^P_2 = \varphi^P_4 = 1/2.
\end{cases}
\]

From the example we can see two intuitions of the permission Shapley value. Firstly, if a player has the same contribution as her inviter (e.g., player 3 and her inviter player 1 in Example~\ref{example:1}), then only the inviter will be rewarded for that contribution. This can be naturally obtained from the property of diffusion incentive compatibility. No matter how much reward the player shared with her inviter, the inviter will then have no incentives to invite the player. On the other hand, if a player invites another player who can create additional contribution (e.g., player 2 and player 4 in Example~\ref{example:1}), then the reward for the contribution will be equally shared with them.

To see the intuition, we consider a special case where an additive assumption is applied, i.e., for each two disjoint subsets of players $S_1$, $S_2\subseteq N$ such that $S_1\cap S_2 = \emptyset$, we have $v(S_1\cup S_2) = v(S_1) + v(S_2)\geq 0$. Under this assumption, denote the depth of $i$ in the tree $i$ belongs to by $d_i$ (the depths of the roots are 0) and the subtree rooted by $i$ by $T_i$. Then the permission Shapley value of all player $i\in N$ is
\[ \varphi^P_i = \sum_{k\in T_i} \frac{v(\{k\})}{d_k+1} \qquad \text{for every } i\in N. \]
Intuitively speaking, the contribution of a player will be uniformly distributed as the reward along the invitation chain.

Now, we show that permission Shapley value is a desirable reward distribution mechanism that satisfies efficiency and diffusion incentive compatibility for diffusion cooperative games in forests.

\begin{theorem}
\label{thm:1}
For the monotone diffusion cooperative game in a forest, if the reward distribution mechanism $\mathcal{M}$ is the permission Shapley value, then $\mathcal{M}$ is efficient and DIC.
\end{theorem}

\begin{proof}
(i) $\mathcal{M}$ is efficient since for all $\theta'\in \Theta(N)$
\[ \sum_{i\in N} \varphi_i^P(v) = \sum_{i\in N} \varphi_i(v^P) = v^P(J_{\mathcal{I}}(G(\theta'))) = v(J_{\mathcal{I}}(G(\theta'))). \]

(ii) For the diffusion incentive compatibility, we will show that for each player $i$, her permission Shapley value is non-decreasing after she invites more players in the game. Consider a player set $X$ which cannot be informed of the game if $i$ does not invite some players. 
Let $P$ be the permission structure if $i$ does not invite these players and $P'$ be the permission structure if $i$ invites these players. Then
\begin{itemize}
    \item Before $i$ invites some players to let $X$ get involved in the game, the permission Shapley value of $i$ is
    \[ \varphi^P_i(v) = \varphi_i(v^P) = \frac{1}{|N\setminus X|!} \sum_{R\in \mathcal{R}(N\setminus X)} C_i(v^P, R). \]
    \item If $i$ invites players such that $X$ then can be involved in the game, notice that for all $R\in \mathcal{R}(N\setminus X)$ and all $Y\subseteq X$, we have $v^P(B^{R,i}) = v^{P'}(B^{R,i}\cup Y)$ since $i$ is not in the coalition. Denote $C_Y(v^{P'}, R, i) = v^{P'}(B^{R,i}\cup Y \cup \{i\}) - v^{P'}(B^{R,i}\cup \{i\})$. Then the permission Shapley value of $i$ will become
    \begin{align*}
        \varphi^{P'}_i(&v) = \frac{1}{|N|!} \sum_{R\in \mathcal{R}(N)} C_i(v^{P'}, R) \\
        & = \frac{|X|!}{|N|!} \sum_{R\in \mathcal{R}(N\setminus X)} \left[ \binom{|N|}{|X|} C_i(v^P,R) + \sum_{Y\subseteq X} C_Y(v^{P'}, R, i) \right] \\ 
        & \geq \frac{|X|!}{|N|!} \sum_{R\in \mathcal{R}(N\setminus X)} \binom{|N|}{|X|} C_i(v^P,R) \\
        & = \frac{1}{(|N|-|X|)!} \sum_{R\in \mathcal{R}(N\setminus X)} C_i(v^P, R)
        = \varphi_i^P(v)
    \end{align*}
\end{itemize}
where the equality in the second line means the marginal contribution gained by $i$ when asserting the players in $X$ to all orders in $\mathcal{R}(N\setminus X)$. Therefore, the permission Shapley value is DIC.
\end{proof}

For structural fairness, we show that permission Shapley value satisfies 1-SF. Intuitively, it means that if a player $i$ invites $j$ to the game, then she will gain at least as much as the reward that is distributed to $j$.

\begin{theorem}\label{thm:sf}
For the monotone diffusion cooperative game in a forest, if the reward distribution mechanism $\mathcal{M}$ is the permission Shapley value, then $\mathcal{M}$ is 1-SF.
\end{theorem}

\begin{proof}
For all $\theta'\in \Theta$, consider $i,j\in J_{\mathcal{I}}(G(\theta'))$ with $j\in \theta_i'$. For all $S\subseteq N$ with $i\in S$, let $Q^i_S = (|N|-|S|)!(|S|-1)!/|N|!$, i.e., the probability of $B^{R,i} = S\setminus \{i\}$ in all orders $R\in \mathcal{R}(N)$. Since $R$ is sampled with uniform distribution, then for all $S\subseteq N$ with $i,j\in N$, $Q^i_S = Q^j_S$. Hence, we have
\begin{align}
    \varphi_i^P & = \sum_{S\ni i} Q^i_S \left[ v^P(S) - v^P(S\setminus\{i\}) \right] \notag \\
    & = \sum_{S\ni i, S\notni j} Q^i_S \left[ v^P(S) - v^P(S\setminus\{i\}) \right] \notag \\ &\quad + \sum_{S\ni i, S\ni j} Q^j_S \left[ v^P(S) - v^P(S\setminus\{i\}) \right] + \sum_{S\notni i, S\ni j} Q^j_S \cdot 0 \notag \\
    & \geq \sum_{S\ni i, S\ni j} Q^j_S \left[ v^P(S) - v^P(S\setminus\{i\}) \right] + \sum_{S\notni i, S\ni j} Q^j_S \cdot 0 \\
    & = \sum_{S\ni i, S\ni j} Q^j_S \left[ v^P(S) - v^P(S\setminus\{i\}) \right] \notag \\ &\quad + \sum_{S\notni i, S\ni j} Q^j_S \cdot \left[ v^P(S) - v^P(S\setminus\{j\}) \right] \\
    & \geq \sum_{S\ni i, S\ni j} Q^j_S \left[ v^P(S) - v^P(S\setminus\{j\}) \right] \notag \\ &\quad + \sum_{S\notni i, S\ni j} Q^j_S \cdot \left[ v^P(S) - v^P(S\setminus\{j\}) \right] \\
    & = \sum_{S\ni j} Q^j_S \left[ v^P(S) - v^P(S\setminus\{j\}) \right] = \varphi_j^P \notag
\end{align}
where the Inequality (1) is satisfied since the game is monotone; the Equality (2) is satisfied since for all $S\subseteq N$ with $i\notin S$ and $j\in S$, $j\notin \alpha(S)$ and then $v^P(S) - v^P(S\setminus\{j\}) = v(\alpha(S)) - v(\alpha(S)) = 0$; the Inequality (3) is satisfied since for all $S\subseteq N$ with $i,j\in S$, $\alpha(S\setminus\{i\}) = \alpha(S\setminus\{i, j\})$ and then $v^P(S) - v^P(S\setminus\{i\}) = v^P(S) - v^P(S\setminus\{i, j\}) \geq v^P(S) - v^P(S\setminus\{j\})$.

Therefore, the permission Shapley value is 1-SF.
\end{proof}

\subsection{Using Weights to Further Utilize the Structure}
Now we look back at the game illustrated in Example~\ref{example:1}. The permission Shapley value suggests an equal share between players 2 and 4. However, in this example, player 2 has diffusion incentives if sharing any amount from player 4's contribution. Simply applying permission Shapley value cannot tell the differences between these two players. Alternatively speaking, we need a method that can tune the parameter $\gamma$ in structural fairness. Therefore, this suggests that weights can be introduced to distinguish the differences among these players. Kalai and Samet~\cite{kalai1987weighted} introduced weights to the Shapley value as an alternative solution to cooperative games. Radzaik~\cite{radzik2012new} further discussed the variants and properties of the weighted Shapley value and Dragan~\cite{dragan2008computation} provided a computation method for weighted Shapley value. Usually, the weighted Shapley value can be defined as:
\[ \varphi^\omega_i(v) = \mathbb{E}_{\mathcal{D}(\omega)}(C_i(v,\cdot)) \]
where $\omega = (\omega(1), \omega(2), \dots, \omega(n)) \in \mathbb{R}^{N}_+$ are the weights assigned to players and $\mathcal{D}(\omega)$ is a distribution on $\mathcal{R}(N)$ based on $\omega$.

To compute $\mathcal{D}(\omega)$, consider an order $R\in\mathcal{R}(N) = (i_1, i_2, \dots, i_n)$, define $\omega_R = \prod_{k=1}^m$ $\left( \omega(i_k) / \sum_{p=1}^k \omega(i_p) \right)$. This can be interpreted as the probability of sampling the order $R$ by agents' weights, e.g., sampling last player as $i_n$ has probability $\omega(i_n) / (\omega(i_1) + \cdots + \omega(i_{n}))$ and sampling previous player as $i_{n-1}$ in the remaining players has probability $\omega(i_{n-1})/(\omega(i_1)+\cdots+\omega(i_{n-1}))$. Finally, in $\mathcal{D}(\omega)$, the probability of selecting order $R$ is $\omega_R$~\cite{kalai1987weighted}. Table~\ref{tab:weights} shows an example of the weight assignments.

\begin{table}[!htbp]
    \centering
    \begin{tabular}{|c|c|c|}
        \hline
        order $R$ & $\omega_R$ ($\omega(i) = 1$ for all $i$) & $\omega_R$ ($\omega(1 \text{ or } 2) = 1, \omega(3) = 2$) \\
        \hline
        (1, 2, 3) & 1/6 & 1/4 \\
        (1, 3, 2) & 1/6 & 1/6 \\
        (2, 1, 3) & 1/6 & 1/4 \\
        (2, 3, 1) & 1/6 & 1/6 \\
        (3, 1, 2) & 1/6 & 1/12 \\
        (3, 2, 1) & 1/6 & 1/12 \\
        \hline
    \end{tabular}
    \caption{An example of computing the $\mathcal{D}(\omega)$ from weights $\omega$. We can see when $\omega(3)$ is larger, player 3 has more chance to appear the the later positions.}
    \label{tab:weights}
\end{table}

Note that when $\omega = 1^{|N|}$, the weighted Shapley value becomes the classical Shapley value. In our setting, we can also assign weights to players. Intuitively, the permission structure shows some kinds of ``external" relations of the players: how players are connected in a social network; while the weights show some ``internal" relations: which player takes a more important role in a coalition. Thus, these two solution concepts are of different classes. In our reward distribution mechanism, we may want to consider not only the ``external" structures but also ``internal" relations between players involved. Moreover, from the perspective of fairness, the weights will decide how much a player $i$ can be rewarded by inviting her neighbours, i.e., the parameter $\gamma$ in structural fairness. Therefore, we introduce a new idea as weighted permission Shapley value.

\begin{definition}
For a cooperative game $v$ on the player set $N$, given a permission structure $P$ and a weight $\omega\in \mathbb{R}^{|N|}_+$, the weighted permission Shapley value for a player $i\in N$ is:
\[ \varphi_i^{\omega, P}(v) = \varphi_i^{\omega} (v^P). \]
\end{definition}

To apply weighted permission Shapley value to a diffusion cooperative game, we still set $P(i)=p'_i$ and then we need a weight function $\omega(i)$ to set weights to each player $i$. As an example, let the weight function be $\omega(i) = d_i + 1$,
where $d_i$ is the depth of player $i$ in the tree $i$ belongs to\footnote{For players who are not in the set $J_{\mathcal{I}}(G(\theta'))$, they can be assigned arbitrary positive weight. We will not specify the conditions for these players later unless necessary.}. Then applying the weighted permission Shapley value in Example~\ref{example:1}, the rewards distributed to players are: 
\[
\begin{cases}
& \varphi^{\omega,P}_1 = 1, \quad \varphi^{\omega,P}_3 = 0, \\
& \varphi^{\omega,P}_2 = 1/3 , \quad \varphi^{\omega,P}_4 = 2/3.
\end{cases}
\]


From the example we can see that we make a difference between players 2 and 3's rewards. Again, consider the special case for the additive characteristic function $v$. Let $T_i$ be the subtree rooted by $i$. Then the weighted permission Shapley value with weight $\omega: \omega(i) = f(d_i)$ for all player $i\in N$ is
\[ \varphi^{\omega,P}_i = \sum_{k\in T_i} \frac{f(d_i)}{\sum_{j=0}^{d_k}f(j)} v(\{k\}).  \]
Intuitively speaking, the reward will be distributed along the invitation chain according to the ratio of the weights rather than uniformly divided.

If we set weights as $1^{|\mathcal{N}|}$, the weighted permission Shapley value will become normal permission Shapley value. Hence, the weighted permission Shapley value is a more general class of mechanisms and we will show that if we set weights properly, it is also a desirable solution to diffusion cooperative game that satisfies efficiency and diffusion incentive compatibility.

\begin{theorem}
\label{thm:tree}
For the monotone diffusion cooperative game in a forest, if the reward distribution mechanism $\mathcal{M}$ is the weighted permission Shapley value with weight function $\omega(i)=f(d_i)$ for all player $i$, which only depends on her distance to initial players, then $\mathcal{M}$ is efficient and DIC.
\end{theorem}

\begin{proof}
(i) $\mathcal{M}$ is efficient since for all $\theta'\in \Theta(N)$, 
\[ \sum_i \varphi_i^{\omega, P}(v) = \sum_i \varphi_i^{\omega}(v^P) = v^P(J_{\mathcal{I}}(G(\theta'))) = v(J_{\mathcal{I}}(G(\theta'))). \]

(ii) For the property of DIC, suppose $X$ is the player set which cannot be informed of the game if $i$ does not invite some players. Let $P$ be the permission structure if $i$ does not invite these players and $P'$ be the permission structure if $i$ invites these players. Then, 
the weighted permission Shapley value before $i$ let $X$ get involved in the game is
\[\varphi_i^{\omega}(v^P)=\sum_{R\in\mathcal{R}(N\setminus X)}\omega_R C_i(v^P,R) \left/ \sum_{R\in\mathcal{R}(N\setminus X)} \omega_R \right. .\]
Consider an order $R_j^p=(i_1,\dots,i_{p-1},j,i_p,\dots,i_m)$, which inserts player $j$ at the position $p$ in $R$. Then from the definition we can derive that $\omega_R = \sum_{p=1}^{m+1} \omega_{R_j^p}$ if for all $k$, $\omega(i_k)$ will not change after $j$ joins in. More generally, for any additional player set $X$, if for all $k$, $\omega(i_k)$ will not change after $X$ joins in, we have $\sum_{R'\in R_X} \omega_{R'} = \omega_R$, where $R_X$ is the set of all possible orders that insert all players in $X$ into the order $R$. Then if $i$ invites players such that $X$ can be involved in the game, since the weight function $\omega(i) = f(d_i)$ only depends on $d_i$, for all player $i\in N\setminus X$, $\omega(i)$ will not change. Hence, the weighted permission Shapley value of $i$ becomes
\begin{align*}
    \varphi^{\omega}_i&(v^{P'}) = \frac{1}{\sum_{R\in \mathcal{R}(N)}\omega_R} \sum_{R\in \mathcal{R}(N)} \omega_R C_i(v^{P'}, R) \\
    & = \frac{1}{\sum_{R\in \mathcal{R}(N\setminus X)}\omega_R} \sum_{R\in \mathcal{R}(N\setminus X)} \sum_{R'\in R_X} \omega_{R'} C_i(v^P, R') \\
    & \geq \frac{1}{\sum_{R\in \mathcal{R}(N\setminus X)}\omega_R} \sum_{R\in \mathcal{R}(N\setminus X)} \sum_{R'\in R_X} \omega_{R'} C_i(v^P, R) \\
    & = \frac{1}{\sum_{R\in \mathcal{R}(N\setminus X)}\omega_R} \sum_{R\in \mathcal{R}(N\setminus X)} \omega_R C_i(v^P, R)
    = \varphi_i^\omega(v^P).
\end{align*}

Therefore, $\mathcal{M}$ is DIC.


\end{proof}

Moreover, by introducing weights, we can make the structural fairness more tunable to customize the requirements in different scenes.

\begin{theorem}
For the monotone diffusion cooperative game in a forest, if the reward distribution mechanism $\mathcal{M}$ is the weighted permission Shapley value with weight function $\omega(i)$ that satisfies for all $\theta'\in \Theta$ and for all $i,j \in J_{\mathcal{I}}(G(\theta'))$ with $j\in \theta'_i$, $\omega(i)/\omega(j) \geq \gamma$, then $\mathcal{M}$ is $\gamma$-SF.
\end{theorem}

\begin{proof}
For all $\theta'\in \Theta$, consider $i,j\in J_{\mathcal{I}}(G(\theta'))$ with $j\in \theta_i'$. For all $S\subseteq N$ with $i\in S$, let $Q^i_S$ be the probability of $B^{R,i} = S\setminus \{i\}$ in all orders $R\in \mathcal{R}(N)$. Since $R$ is sampled with distribution $\mathcal{D}(\omega)$, then for all $S\subseteq N$ with $i,j\in N$, we have
\[ \frac{Q^i_S}{Q^j_S} = \frac{\omega(i)}{\omega(j)}. \]
Hence, $Q^i_S = \frac{\omega(i)}{\omega(j)}Q^j_S$ and we have
\begin{align}
    \varphi_i^{\omega, P} & = \sum_{S\ni i} Q^i_S \left[ v^P(S) - v^P(S\setminus\{i\}) \right] \notag \\
    & = \sum_{S\ni i, S\notni j} Q^i_S \left[ v^P(S) - v^P(S\setminus\{i\}) \right] \notag \\ &\quad + \frac{\omega(i)}{\omega(j)} \left\{ \sum_{S\ni i, S\ni j} Q^j_S \left[ v^P(S) - v^P(S\setminus\{i\}) \right] + \sum_{S\notni i, S\ni j} Q^j_S \cdot 0 \right\} \notag \\
    & \geq \frac{\omega(i)}{\omega(j)} \left\{ \sum_{S\ni i, S\ni j} Q^j_S \left[ v^P(S) - v^P(S\setminus\{i\}) \right] + \sum_{S\notni i, S\ni j} Q^j_S \cdot 0 \right\} \notag \\
    & \geq \frac{\omega(i)}{\omega(j)} \left\{ \sum_{S\ni i, S\ni j} Q^j_S \left[ v^P(S) - v^P(S\setminus\{j\}) \right] \right. \notag \\ &\quad + \left. \sum_{S\notni i, S\ni j} Q^j_S \cdot \left[ v^P(S) - v^P(S\setminus\{j\}) \right] \right\} \\
    & = \frac{\omega(i)}{\omega(j)} \left\{ \sum_{S\ni j} Q^j_S \left[ v^P(S) - v^P(S\setminus\{j\}) \right] \right\} \notag \\
    & = \frac{\omega(i)}{\omega(j)} \varphi_j^{\omega, P} \geq \gamma \varphi_j^{\omega, P} \notag
\end{align}
where the Inequality (4) is satisfied according to the same reason for Equality (2) and Inequality (3) in Theorem~\ref{thm:sf}.

Therefore, the $\mathcal{M}$ is $\gamma$-SF.
\end{proof}

Intuitively, the parameter of the structural fairness is determined by $\min_{j\in \theta'_i} \omega(i)/\omega(j)$. For example, if $\omega(i) = d_i + 1$, then the corresponding weighted permission Shapley value is $1/2$-SF.

\section{The Only Solution to Query Network}
A classic problem that can be modelled as a diffusion cooperative game is the query incentive network~\cite{kleinberg2005query}, where a requester tries to find an answer to a specific problem by diffusing the request in the network. A solution is given by the winning team from MIT in the DARPA red balloon challenge~\cite{pickard2011time}. In the challenge, each team needed to find positions of the red balloons to obtain rewards. The solution proposed by the winning team is that they promised half of the reward for the first person who found it and one-fourth of the reward to the person who invited the finder and so on. The requester (initial players) will get the remaining. An example is shown in Figure~\ref{figure:redballoon}.

\begin{figure}[htbp]
\center
\begin{tikzpicture}
\node[circle,inner sep=2.5pt,draw] (R) at (0,0) {$1$};
\node[blue] (R2) at (-.8,0) {\$1000};
\node[circle,inner sep=2.5pt,draw] (A) at (-1,-1) {$3$};
\node[blue] (A2) at (-1.8,-1) {\$1000};
\node[circle,inner sep=2.5pt,draw] (B) at (1,-1) {$4$};
\node[circle,inner sep=2.5pt,red,draw] (C) at (-1.5,-2) {$6$};
\node[blue] (C2) at (-2.3,-2) {\$2000};
\node[circle,inner sep=2.5pt,draw] (D) at (-0.5,-2) {$7$};
\node[circle,inner sep=2.5pt,draw] (E) at (3.5,0) {$2$};
\node[blue] (E2) at (2.7,0) {\$1000};
\node[circle,inner sep=2.5pt,draw] (F) at (3.5,-1) {$5$};
\node[blue] (F2) at (2.7,-1) {\$1000};
\node[circle,inner sep=2.5pt,red,draw] (G) at (3,-2) {$8$};
\node[blue] (G2) at (2.2,-2) {\$2000};
\node[circle,inner sep=2.5pt,draw] (H) at (4,-2) {$9$};
\draw[->] (R) -> (A);
\draw[->] (R) -> (B);
\draw[->] (A) -> (C);
\draw[->] (A) -> (D);
\draw[->] (E) -> (F);
\draw[->] (F) -> (G);
\draw[->] (F) -> (H);
\end{tikzpicture}
\caption{An example of the solution given by winning team in DARPA challenge. Players 1 and 2 are the initial team members. Players 6 and 8 are those who find the balloon.}\label{figure:redballoon}
\end{figure}
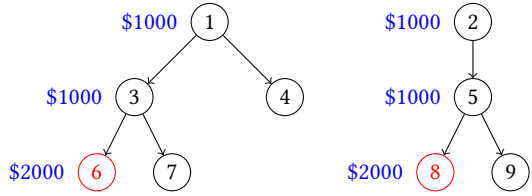

We can model it as a diffusion cooperative game with an additive characteristic function where only one agent (the answer holder) can contribute the utility. Without loss of generality, we assume there is only one initial player as the requester and only one player can provide the answer and the answer will bring one unit value (for the game in the example shown in Figure~\ref{figure:redballoon}, we can seperate it as two games and add the two solutions up). More precisely, in the corresponding diffusion cooperative game to the query network, we set $\mathcal{I} = \{1\}$ and for any $S\subseteq N$, $v(S) = 1$ if and only if the answer holder $j\in S$. In general, a solution to the query network is a reward distribution $x(i)$ for all players $i$ along the path from the requester to answer provider. We require that the reward distribution $x(i)$ satisfies the following properties. 
\begin{definition}
A reward distribution $x(i)$ for all players $i$ along the path from the requester 1 to answer provider $j$ in the query network is
\begin{itemize}
    \item \emph{anonymous} if $x(i)$ only depends on $d_i$ and $d_j$ (the distances from player 1 to $i$ and $j$, which  indicates $i$'s position);
    \item \emph{strongly individually rational} (SIR) if $x(i)>0$ for all $i$ from 1 to $j$;
    \item \emph{efficient} if $\sum_i x(i) = 1$.
\end{itemize}
\end{definition}

For example, the solution given by the DARPA winning team can be described as $x(i) = 1/2^{d_j + 1 - d_i}$ with $i>1$ and $x(1) = 1/2^{d_j}$ for all $i$ on the path from 1 to $j$. We show that all the solution concepts can be mapped to a set of weighted permission Shapley values. In other words, the set of weighted permission Shapley value is the only satifiable solution to the query network.

\begin{theorem}
A solution to a query network is anonymous, strongly individually rational and efficient if and only if it is a weighted permission Shapley value with $\omega(i) = f(d_i, d_j)$, where $j$ is the answer provider.
\end{theorem}

\begin{proof}
``$\Rightarrow$": suppose $x(i)$ is an anonymous, SIR and efficient solution to the query network. Construct a weighted permission Shapley value 
with $\omega(i) = x(i)$ for all $i$ on the path from agent 1 to $j$ and $\omega(i) = 1$ for other players. Then,
\[ \varphi^{\omega,P}_i = \sum_{k\in T_i} \frac{\omega(i)}{\sum_{l=0}^{d_k}\omega(l)} v(\{k\}) = \frac{x(i)}{\sum_l x(l)} = x(i)  \]
for all $i$ on the path from agent 1 to $j$ and otherwise $\varphi^{\omega,P}_i = 0$.

``$\Leftarrow$": consider a weighted permission Shapley value with $\omega(i) = f(d_i, d_j)$. $\varphi^{\omega,P}_i = 0$ if $i$ is not an ancestor of $j$. For all $i$ on the path from 1 to $j$, we have
\[ \varphi^{\omega,P}_i = \frac{f(d_i,d_j)}{\sum_{k\in\text{ path from agent 1 to } j} f(d_k, d_j)} = \frac{f(d_i,d_j)}{\sum_{k=0}^{d_j} f(k, d_j)} > 0 \]
which only depends on $d_i$ and $d_j$. Finally, the efficiency holds since $v(N) = 1$.
\end{proof}

Again, take the solution of DARPA winning team as an example. Consider the weighted permission Shapley value with $\omega(i) = \max\{1, 2^{d_i - 1}\}$, and we have 
\[ \varphi_i^{\omega,P} = \frac{2^{d_i-1}}{1+\sum_{k=1}^{d_j}2^{k-1}} = \frac{2^{d_i-1}}{2^{d_j}} = 1/2^{d_j+1-d_i} \]
for all $i$ on the path from agent 1 to $j$ with $i>1$, $\varphi_1^{\omega,P} = 1/2^{d_j}$ and $\varphi_i^{\omega,P} = 0$ otherwise, which is equivalent to the solution of DARPA winning team. Moreover, in this example, $\omega(i)$ only depends on $d_i$, so that we know the solution of DARPA winning team is diffusion incentive compatible according to Theorem~\ref{thm:tree}. 

\section{From Forest to DAG}
\label{section:dag}
In this section, we extend our result in the setting of forest to a general DAG model. An instance of a cooperative game in a DAG 
is shown in Example~\ref{example:dag} below.

\begin{example}
\label{example:dag}
Consider the case illustrated in Figure~\ref{figure:dag}, where 
$\mathcal{I} = \{1, 2\}$. Agent $1$ asks her friends $3$ and $5$, and $2$ asks $4$ and $5$. Then $3$, $4$ and $5$ further ask their friends and so on. 
Suppose the player $5$ will join in if $2$ invites her or $1$ and $3$ both invite her and the player $7$ will join in if $4$ invites her or $5$ and $6$ both invite her. Suppose the characteristic function $v$ is defined as for every $S\subseteq N$, 
\[ v(S) = \begin{cases}
2 & \text{if } 7\in S; \\
1 & \text{if } \{1, 2\}\cap S \neq \emptyset, 7\notin S; \\
0 & \text{otherwise.}
\end{cases} \] 

\begin{figure}[htbp]
\center
\begin{tikzpicture}
\node[circle,inner sep=2.5pt,draw] (C) at (0,0) {$5$};
\node[circle,inner sep=2.5pt,fill=green!20,draw] (R) at (150:1.5) {$1$};
\node[circle,inner sep=2.5pt,fill=green!20,draw] (T) at (90:1.5) {$2$};
\node[circle,inner sep=2.5pt,draw] (A) at (-150:1.5) {$3$};
\node[circle,inner sep=2.5pt,draw] (B) at (30:1.5) {$4$};
\node[circle,inner sep=2.5pt,draw] (D) at (-90:1.5) {$6$};
\node[circle,inner sep=2.5pt,draw] (E) at (-30:1.5) {$7$};
\draw[->] (R) -> (A);
\draw[->] (T) -> (B);
\draw[->] (R) -> (C);
\draw[->] (T) -> (C);
\draw[->] (A) -> (C);
\draw[->] (A) -> (D);
\draw[->] (B) -> (E);
\draw[->] (C) -> (E);
\draw[->] (D) -> (E);
\end{tikzpicture}
\caption{An example of a cooperative game in DAG. The green nodes are initial players.}\label{figure:dag}
\end{figure}
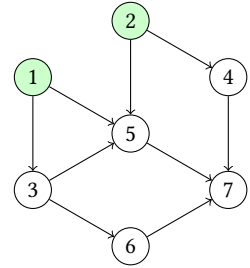
\end{example}

\subsection{Permission Structure with Mixed Approach}
\label{sec:mixed}
Note that there is no existing approach of permission structure that can handle the case in Example~\ref{example:dag}. Gilles et al.~\cite{gilles1992games,gilles1999cooperative} considered the cases where each player has to get permissions either from all or at least one of her superiors. 
Here we consider a more general case where each player can get permission from a partial subset of her superiors. 
Hence, we propose the permission structure with mixed approach. 

A permission structure with mixed approach $\varrho$ on $N$ is a pair $(P, \Psi)$ where $P$ is a mapping $N\rightarrow 2^N$. The mapping $P$ is asymmetric, i.e., for any pair $i$, $j\in N$, $j\in P(i)$ implies that $i\notin P(j)$ and  $j$ is called a superior of $i$. Define $P^{-1}(i) = \{ j\in N \mid i\in P(j) \}$ as the set of $i$'s successors. Notice that $P(i) = \emptyset$ if $i\in \mathcal{I}$. The set $\Psi = \{ \psi_i \mid i\in N \}$ consists of players' satisfiable expressions. For a coalition $S \subseteq N$, the expression set $L_S$ is recursively defined:
\begin{enumerate}
\item $\xi_i\in L_S$ for any $i\in S$;
\item $a\vee b\in L_S$ for any $a$, $b\in L_S$;
\item $a\wedge b\in L_S$ for any $a$, $b\in L_S$.
\end{enumerate}
Given an expression $\psi \in L_S$ and a coalition $T\subseteq N$, the evaluation $\psi(T)$ is the boolean result of $\psi$ when $\xi_i = 1$ if $i\in T$ and $\xi_i = 0$ otherwise for all $i\in S$. Then for $i\in N$, her satisfiable expression $\psi_i\in L_{P(i)}$, indicates how her superiors hold the authority of permission: only when $\psi_i(T)$ is true, $i$ can get the permission to create value in $T$. Specially, if $i\in \mathcal{I}$, $\psi_i$ is always true since $i$ does not need any others' permission.
For instance, in Example~\ref{example:dag}, $\psi_5 = \xi_2\vee (\xi_1\wedge \xi_3)$ and $\psi_7 = \xi_4\vee (\xi_5\wedge \xi_6)$. With the generalized permission structure, an autonomous coalition now can be defined as follows.

\begin{definition}
A coalition $S\subseteq N$ is autonomous in the permission structure $\varrho = (P, \Psi)$ if for all $i\in S$, $\psi_i(S) = 1$.
\end{definition}

Denote the set of all autonomous coalitions in $\varrho$ by $A_{\varrho}$. We can observe several properties of $A_{\varrho}$ as follows. 

\begin{lemma}
\label{lemma:1}
Let $\varrho$ be a permission structure on $N$, then (i) $\emptyset\in A_{\varrho}$, (ii) $N\in A_{\varrho}$ and (iii) for all $S,T\in A_{\varrho}$, $S\cup T\in A_{\varrho}$.
\end{lemma}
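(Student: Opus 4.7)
The plan is to dispatch the three parts separately, with the real content living entirely in part (iii), which rests on a monotonicity property of the expression language $L_E$.

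Part (i) is vacuous: the defining condition ``for every $i \in E$, $\varphi_i(E) = 1$'' imposes no constraint when $E = \emptyset$, so $\emptyset \in A_\varrho$ immediately. For part (ii), I would argue that $\varphi_i(\mathcal{N}) = 1$ for every $i \in \mathcal{N}$. If $i \in H_S$ this holds by the stated convention that $\varphi_i$ is always true. Otherwise $\varphi_i \in L_{S(i)}$, and every atom $j$ appearing in $\varphi_i$ satisfies $j \in S(i) \subseteq \mathcal{N}$, so each atom evaluates to $1$ under the coalition $\mathcal{N}$. A short induction on the recursive construction of $\varphi_i$ via $\vee$ and $\wedge$ then forces $\varphi_i(\mathcal{N}) = 1$.

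The core of the proof is part (iii), which I would reduce to the following monotonicity lemma: for every expression $\varphi \in L_{S(i)}$ and every pair of coalitions $F_1 \subseteq F_2 \subseteq \mathcal{N}$, $\varphi(F_1) = 1$ implies $\varphi(F_2) = 1$. This is established by structural induction on $\varphi$: the base case is an atom $j$, for which $j \in F_1 \subseteq F_2$ gives $j \in F_2$; the inductive cases $\varphi = a \vee b$ and $\varphi = a \wedge b$ are immediate from the inductive hypothesis applied to $a$ and $b$, using that both connectives preserve the truth of their operands when the valuation grows (the expression language contains no negations, which is exactly what makes the semantics monotone). Given the lemma, take $E, F \in A_\varrho$ and any $i \in E \cup F$; without loss of generality $i \in E$, so $\varphi_i(E) = 1$, and applying monotonicity with $F_1 = E$ and $F_2 = E \cup F$ yields $\varphi_i(E \cup F) = 1$. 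This verifies $E \cup F \in A_\varrho$.

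The only conceptual step is identifying the monotonicity lemma as the right hammer for part (iii); once that is isolated, the induction is routine. I expect no genuine obstacle, and the write-up can be kept to a few lines by stating the monotonicity of $L_E$ explicitly and then quoting it in the union argument.
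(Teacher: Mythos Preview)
Your proposal is correct and follows essentially the same route as the paper: vacuity for (i), all atoms true for (ii), and monotonicity of positive Boolean expressions for (iii). The only difference is that you spell out the structural induction establishing monotonicity, whereas the paper simply asserts that ``more variables get true'' forces $\varphi_i(E\cup F)=1$; your version is a bit more careful but not substantively different.
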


\begin{proof}
(i) Since there is no $i\in \emptyset$, then $\emptyset\in A_{\varrho}$. (ii) for all $\psi\in L_S$, $S\subseteq N$, $\psi(N) = 1$ since all variables are true. Thus, $N\in A_{\varrho}$. (iii) for all $i\in S\cup T$, if $i\in S$, $\psi_i(S) = 1$ implies $\psi_i(S\cup T) = 1$ since more variables get true; if $i\in T$, similarly we have $\psi_i(S\cup T) = 1$. Thus, $S\cup T \in A_{\varrho}$.
\end{proof}

Then we can give the definition of the largest autonomous part of a coalition.

\begin{definition}
Let $\varrho$ be a permission structure on $N$. Then the largest autonomous part of a coalition $S\subseteq N$ is defined by
\[ \alpha(S) = \bigcup \{ T \mid T\subseteq S \text{ and } T\in A_{\varrho} \}. \]
\end{definition}

Intuitively, $\alpha(S)$ is the largest autonomous sub-coalition of $S$, which suggests that for any player $i\in S\backslash \alpha(S)$, she cannot create value in coalition $S$. 



Similar to Section~\ref{sec:apppsv}, we can map a characteristic function $v$ to a projection $v^{\varrho}$ on $\varrho$, where $v^{\varrho}(S) = v(\alpha(S))$, for every coalition $S\subseteq N$.

\subsection{Weighted Shapley Value on Permission Structure}
Finally, we introduce weighted Shapley value with mixed permission structure as a class of mechanisms for diffusion cooperative game on DAGs.

\begin{definition}
For a cooperative game $v$ on the player set $N$, given a permission structure $\varrho$ and a weight $\omega\in \mathbb{R}^{|N|}_+$, the weighted permission Shapley value with mixed approach for a player $i\in N$ is
\[ \varphi^{\omega,\varrho}_i(v) = \varphi^{\omega}_i(v^\varrho). \]
\end{definition}

As an example, if we apply the weighted permission Shapley value with mixed approach on the diffusion cooperative game in Example~\ref{example:dag} and letting $\omega = 1^{|N|}$, then the reward distributed to each player is 
\[\begin{cases}
\varphi^{\omega,\varrho}_{1} = 11/15, \qquad
\varphi^{\omega,\varrho}_{2} = 2/3, \\
\varphi^{\omega,\varrho}_3 = \varphi^{\omega,\varrho}_5 = \varphi^{\omega,\varrho}_6 = \varphi^{\omega,\varrho}_7 = 1/15, \\
\varphi^{\omega,\varrho}_4 = 1/3. \\
\end{cases}\]
Similar to the mechanisms in a forest, we can conclude that weighted Shapley value with mixed approach is a desirable mechanism that satisfies efficiency and diffusion incentive compatibility if the weight function is selected properly.

\begin{definition}
A weight function $\omega_i$ is proper if it only depends on $d_i$ as $\omega_i = f(d_i)$, where $d_i$ is the distance of player $i$ to the initial players $\mathcal{I}$ in the graph, i.e. the minimum distance between $i$ to one of the initial players ($\min_{j\in \mathcal{I}}\{d_{ji}\}$) and $f: \mathbb{N}\rightarrow \mathbb{R}_+$ is monotone non-decreasing.
\end{definition}

\begin{theorem}
For the monotone diffusion cooperative game in a DAG, if the reward distribution mechanism $\mathcal{M}$ is the weighted permission Shapley value with mixed approach with a proper weight function $\omega_i = f(d_i)$, 
then $\mathcal{M}$ is efficient and DIC.
\end{theorem}

\begin{proof}
The efficiency can be easily derived since for all $\theta'\in \Theta$, $v^{\varrho}(J_{\mathcal{I}}(G(\theta'))) = v(J_{\mathcal{I}}(G(\theta')))$.

For the property of DIC, if we consider each player $i$ and edge $e=(i,j)\in E$, there are two cases that may happen if $i$ does not invite $j$ given any possible report profile of others $\theta_{-i}'\in \Theta_{-i}$.

(i) if $j$ cannot join the coalition, i.e., $j\notin J_{\mathcal{I}}(G(\theta'))$, then the proof is similar to that of Theorem~\ref{thm:tree} that shows player $i$ will not get more reward without inviting $j$.

(ii) if $j$ still can join the coalition, i.e., $j\in J_{\mathcal{I}}(G(\theta'))$, Let $v^{\varrho}$ be the projection game when $i$ invites $j$ and $v^{\varrho'}$ be the projection game when $i$ does not invite $j$. Suppose $R$ is an order in $\mathcal{R}(N)$. Then we have 
\[
\begin{cases}
C_i(v^{\varrho}, R) = C_i(v^{\varrho'}, R) & \text{if } i \text{ comes before } j \text{ in } R; \\
C_i(v^{\varrho}, R) \geq C_i(v^{\varrho'}, R)& \text{if } j \text{ comes before } i \text{ in } R.
\end{cases}
\]
The above (in)equalities hold because (1) if $i$ is at the position before $j$, the marginal contribution of $i$ is unchanged; (2) if $i$ is at the position after $j$, she cannot bring $j$'s contribution when she does not invite $j$. Note that $d_k$ will not change for any player $k$ with $d_k< d_j$. Let $d_j'$ be the distance of player $j$ to initial players if $i$ does not invite $j$ and hence $d_j'\geq d_j$. Thus, (1) if $d_j' = d_j$, then the weights of all players will not change and so do the weights of the orders, which can be computed from weights $\omega$. Hence,
\[ \varphi_i^{\omega, \varrho} = \sum_{R\in \mathcal{R}(N)} \omega_R C_i(v^{\varrho},R) \geq \sum_{R\in \mathcal{R}(N)} \omega_R C_i(v^{\varrho'},R) = \varphi_i^{\omega, \varrho'} \]
where $\varphi_i^{\omega, \varrho}$ is player $i$'s reward when she invites $j$ and $\varphi_i^{\omega, \varrho'}$ is player $i$'s reward when she does not invite $j$. (2) if $d_j' > d_j$, then $f(d_j)\leq f(d_j')$ since $f$ is monotone non-decreasing. 
Let $R_{ij}\in \mathcal{R}(N)$ be some order where $i$ comes before $j$ and $R_{ji}$ is the corresponding order where $i$ and $j$'s positions are exchanged in $R_{ij}$. 
We have $\frac{\omega_{R_{ji}}}{\omega_{R_{ij}}}\geq \frac{\omega'_{R_{ji}}}{\omega'_{R_{ij}}}$ (since $R_{ij}$ is more likely sampled than $R_{ji}$ with a larger $\omega(j)$). Hence, we have
\begin{align*}
\varphi_i^{\omega, \varrho} & = \sum_{R_{ij}\in \mathcal{R}(N)} \left[ \omega_{R_{ij}} C_i(v^{\varrho}, R_{ij}) + \omega_{R_{ji}} C_i(v^{\varrho}, R_{ji}) \right] \\
& \geq \sum_{R_{ij}\in \mathcal{R}(N)} \left[ \omega'_{R_{ij}} C_i(v^{\varrho'},R_{ij}) + \omega'_{R_{ji}} C_i(v^{\varrho'},R_{ji}) \right] \\
& = \varphi_i^{\omega, \varrho'}.
\end{align*}
The inequality in the second line holds since $ \omega_{R_{ij}} +  \omega_{R_{ji}} =  \omega'_{R_{ij}} +  \omega'_{R_{ji}}$, 
$C_i(v^{\varrho}, R_{ij})\geq C_i(v^{\varrho' }, R_{ji})$ and $C_i(v^{\varrho}, R_{ij}) = C_i(v^{\varrho'}, R_{ij})$ (i.e., the larger term will obtain a larger factor). Therefore, in all cases $i$ will not invite fewer agents. As a result, $\mathcal{M}$ is DIC.
\end{proof}

Finally, it is worth to point out that the weighted permission Shapley value that represents the solution of DARPA winning team also has a monotone non-decreasing weight function $\omega$. Hence, it can be seen as \textbf{a diffusion incentive compatible extension in DAGS of DARPA winning team's solution}.

\subsection{Applying on General Graphs}
We discuss the possibility to apply our method on general graphs. One may observe that in some real scenarios, the DAG we modelled is not necessarily the underlying social network. The underlying network could be any undirect graph. This paper focuses on the DAG because in practice, a DAG could be the result of players' invitations associated with timestamp. Another reason to use DAG is that it is intuitive and handy to define permissions, which clearly specifies who permits/invites who.

It is worthwhile to point out that actually, our results can be extended to general undirected graphs. The only difficulty is defining the permission structure but there are many possible ways. One way we can provide here is to define the permission set of an agent $i$ to be all her neighbors via whom $i$ can reach one of the sources with a simple path (this also works even if there are cycles), i.e., $P(i) = \{j\mid (i,j)\in E \text{ and there exists a simple path } i\rightarrow j\rightarrow\dots \rightarrow s \text{ such that } s\in \mathcal{I} \}$. Then, all results in this section can still hold.

\section{Conclusion}
In this paper, we formalize the problem of diffusion incentives in cooperative games for the first time. We design a family of reward distribution mechanisms such that players are incentivized to invite their neighbours to join the coalition. The family of reward distribution mechanisms combines the idea of the Shapley value with permission structure and weight system, which well explains the classic solution given by the winning team of DARPA 2009 red balloon challenge. We expect that our work will have very promising applications via social networks such as resource acquisition and question answering. One interesting future direction is to characterize the necessary and sufficient conditions for diffusion incentive compatibility.






\bibliographystyle{ACM-Reference-Format} 
\bibliography{sample}


\begin{thebibliography}{20}


\ifx \showCODEN    \undefined \def \showCODEN     #1{\unskip}     \fi
\ifx \showDOI      \undefined \def \showDOI       #1{#1}\fi
\ifx \showISBNx    \undefined \def \showISBNx     #1{\unskip}     \fi
\ifx \showISBNxiii \undefined \def \showISBNxiii  #1{\unskip}     \fi
\ifx \showISSN     \undefined \def \showISSN      #1{\unskip}     \fi
\ifx \showLCCN     \undefined \def \showLCCN      #1{\unskip}     \fi
\ifx \shownote     \undefined \def \shownote      #1{#1}          \fi
\ifx \showarticletitle \undefined \def \showarticletitle #1{#1}   \fi
\ifx \showURL      \undefined \def \showURL       {\relax}        \fi
\providecommand\bibfield[2]{#2}
\providecommand\bibinfo[2]{#2}
\providecommand\natexlab[1]{#1}
\providecommand\showeprint[2][]{arXiv:#2}

\bibitem[\protect\citeauthoryear{Chun}{Chun}{1991}]%
        {chun1991symmetric}
\bibfield{author}{\bibinfo{person}{Youngsub Chun}.}
  \bibinfo{year}{1991}\natexlab{}.
\newblock \showarticletitle{On the symmetric and weighted Shapley values}.
\newblock \bibinfo{journal}{\emph{International Journal of Game Theory}}
  \bibinfo{volume}{20}, \bibinfo{number}{2} (\bibinfo{year}{1991}),
  \bibinfo{pages}{183--190}.
\newblock


\bibitem[\protect\citeauthoryear{Dragan}{Dragan}{2008}]%
        {dragan2008computation}
\bibfield{author}{\bibinfo{person}{Irinel~C Dragan}.}
  \bibinfo{year}{2008}\natexlab{}.
\newblock \bibinfo{booktitle}{\emph{On the computation of Weighted Shapley
  Values for cooperative TU games}}.
\newblock \bibinfo{type}{{T}echnical {R}eport}.
  \bibinfo{institution}{University of Texas at Arlington}.
\newblock


\bibitem[\protect\citeauthoryear{Emek, Karidi, Tennenholtz, and Zohar}{Emek
  et~al\mbox{.}}{2011}]%
        {emek2011mechanisms}
\bibfield{author}{\bibinfo{person}{Yuval Emek}, \bibinfo{person}{Ron Karidi},
  \bibinfo{person}{Moshe Tennenholtz}, {and} \bibinfo{person}{Aviv Zohar}.}
  \bibinfo{year}{2011}\natexlab{}.
\newblock \showarticletitle{Mechanisms for multi-level marketing}. In
  \bibinfo{booktitle}{\emph{Proceedings of the 12th ACM conference on
  Electronic commerce}}. ACM, \bibinfo{pages}{209--218}.
\newblock


\bibitem[\protect\citeauthoryear{Gilles, Owen, et~al\mbox{.}}{Gilles
  et~al\mbox{.}}{1999}]%
        {gilles1999cooperative}
\bibfield{author}{\bibinfo{person}{Robert~P Gilles}, \bibinfo{person}{Guillermo
  Owen}, {et~al\mbox{.}}} \bibinfo{year}{1999}\natexlab{}.
\newblock \bibinfo{booktitle}{\emph{Cooperative games and disjunctive
  permission structures}}.
\newblock \bibinfo{publisher}{Tilburg University Tilburg, the Netherlands}.
\newblock


\bibitem[\protect\citeauthoryear{Gilles, Owen, and van~den Brink}{Gilles
  et~al\mbox{.}}{1992}]%
        {gilles1992games}
\bibfield{author}{\bibinfo{person}{Robert~P Gilles}, \bibinfo{person}{Guillermo
  Owen}, {and} \bibinfo{person}{Rene van~den Brink}.}
  \bibinfo{year}{1992}\natexlab{}.
\newblock \showarticletitle{Games with permission structures: the conjunctive
  approach}.
\newblock \bibinfo{journal}{\emph{International Journal of Game Theory}}
  \bibinfo{volume}{20}, \bibinfo{number}{3} (\bibinfo{year}{1992}),
  \bibinfo{pages}{277--293}.
\newblock


\bibitem[\protect\citeauthoryear{Jeong, Morris, Teevan, and Liebling}{Jeong
  et~al\mbox{.}}{2013}]%
        {jeong2013crowd}
\bibfield{author}{\bibinfo{person}{Jin-Woo Jeong},
  \bibinfo{person}{Meredith~Ringel Morris}, \bibinfo{person}{Jaime Teevan},
  {and} \bibinfo{person}{Dan Liebling}.} \bibinfo{year}{2013}\natexlab{}.
\newblock \showarticletitle{A crowd-powered socially embedded search engine}.
  In \bibinfo{booktitle}{\emph{Seventh International AAAI Conference on Weblogs
  and Social Media}}.
\newblock


\bibitem[\protect\citeauthoryear{Kalai and Samet}{Kalai and Samet}{1987}]%
        {kalai1987weighted}
\bibfield{author}{\bibinfo{person}{Ehud Kalai} {and} \bibinfo{person}{Dov
  Samet}.} \bibinfo{year}{1987}\natexlab{}.
\newblock \showarticletitle{On weighted Shapley values}.
\newblock \bibinfo{journal}{\emph{International Journal of Game Theory}}
  \bibinfo{volume}{16}, \bibinfo{number}{3} (\bibinfo{year}{1987}),
  \bibinfo{pages}{205--222}.
\newblock


\bibitem[\protect\citeauthoryear{Kleinberg and Raghavan}{Kleinberg and
  Raghavan}{2005}]%
        {kleinberg2005query}
\bibfield{author}{\bibinfo{person}{Jon Kleinberg} {and}
  \bibinfo{person}{Prabhakar Raghavan}.} \bibinfo{year}{2005}\natexlab{}.
\newblock \showarticletitle{Query incentive networks}. In
  \bibinfo{booktitle}{\emph{46th Annual IEEE Symposium on Foundations of
  Computer Science (FOCS'05)}}. IEEE, \bibinfo{pages}{132--141}.
\newblock


\bibitem[\protect\citeauthoryear{Li, Hao, Zhao, and Zhou}{Li
  et~al\mbox{.}}{2017}]%
        {li2017mechanism}
\bibfield{author}{\bibinfo{person}{Bin Li}, \bibinfo{person}{Dong Hao},
  \bibinfo{person}{Dengji Zhao}, {and} \bibinfo{person}{Tao Zhou}.}
  \bibinfo{year}{2017}\natexlab{}.
\newblock \showarticletitle{Mechanism design in social networks}. In
  \bibinfo{booktitle}{\emph{Thirty-First AAAI Conference on Artificial
  Intelligence}}.
\newblock


\bibitem[\protect\citeauthoryear{Narahari}{Narahari}{2014}]%
        {narahari2014game}
\bibfield{author}{\bibinfo{person}{Yadati Narahari}.}
  \bibinfo{year}{2014}\natexlab{}.
\newblock \bibinfo{booktitle}{\emph{Game theory and mechanism design}}.
  Vol.~\bibinfo{volume}{4}.
\newblock \bibinfo{publisher}{World Scientific}.
\newblock


\bibitem[\protect\citeauthoryear{Nisan, Roughgarden, Tardos, and
  Vazirani}{Nisan et~al\mbox{.}}{2007}]%
        {nisan2007algorithmic}
\bibfield{author}{\bibinfo{person}{Noam Nisan}, \bibinfo{person}{Tim
  Roughgarden}, \bibinfo{person}{Eva Tardos}, {and} \bibinfo{person}{Vijay~V
  Vazirani}.} \bibinfo{year}{2007}\natexlab{}.
\newblock \bibinfo{booktitle}{\emph{Algorithmic game theory}}.
\newblock \bibinfo{publisher}{Cambridge university press}.
\newblock


\bibitem[\protect\citeauthoryear{Pickard, Pan, Rahwan, Cebrian, Crane, Madan,
  and Pentland}{Pickard et~al\mbox{.}}{2011}]%
        {pickard2011time}
\bibfield{author}{\bibinfo{person}{Galen Pickard}, \bibinfo{person}{Wei Pan},
  \bibinfo{person}{Iyad Rahwan}, \bibinfo{person}{Manuel Cebrian},
  \bibinfo{person}{Riley Crane}, \bibinfo{person}{Anmol Madan}, {and}
  \bibinfo{person}{Alex Pentland}.} \bibinfo{year}{2011}\natexlab{}.
\newblock \showarticletitle{Time-critical social mobilization}.
\newblock \bibinfo{journal}{\emph{Science}} \bibinfo{volume}{334},
  \bibinfo{number}{6055} (\bibinfo{year}{2011}), \bibinfo{pages}{509--512}.
\newblock


\bibitem[\protect\citeauthoryear{Radzik}{Radzik}{2012}]%
        {radzik2012new}
\bibfield{author}{\bibinfo{person}{Tadeusz Radzik}.}
  \bibinfo{year}{2012}\natexlab{}.
\newblock \showarticletitle{A new look at the role of players’ weights in the
  weighted Shapley value}.
\newblock \bibinfo{journal}{\emph{European Journal of Operational Research}}
  \bibinfo{volume}{223}, \bibinfo{number}{2} (\bibinfo{year}{2012}),
  \bibinfo{pages}{407--416}.
\newblock


\bibitem[\protect\citeauthoryear{Shapley}{Shapley}{1953}]%
        {shapley1953value}
\bibfield{author}{\bibinfo{person}{Lloyd~S Shapley}.}
  \bibinfo{year}{1953}\natexlab{}.
\newblock \showarticletitle{A value for n-person games}.
\newblock \bibinfo{journal}{\emph{Contributions to the Theory of Games}}
  \bibinfo{volume}{2}, \bibinfo{number}{28} (\bibinfo{year}{1953}),
  \bibinfo{pages}{307--317}.
\newblock


\bibitem[\protect\citeauthoryear{Singla, Horvitz, Kohli, White, and
  Krause}{Singla et~al\mbox{.}}{2015}]%
        {singla2015information}
\bibfield{author}{\bibinfo{person}{Adish Singla}, \bibinfo{person}{Eric
  Horvitz}, \bibinfo{person}{Pushmeet Kohli}, \bibinfo{person}{Ryen White},
  {and} \bibinfo{person}{Andreas Krause}.} \bibinfo{year}{2015}\natexlab{}.
\newblock \showarticletitle{Information gathering in networks via active
  exploration}. In \bibinfo{booktitle}{\emph{Twenty-Fourth International Joint
  Conference on Artificial Intelligence}}.
\newblock


\bibitem[\protect\citeauthoryear{Sinha and Wellman}{Sinha and Wellman}{2019}]%
        {sinha2019incentivizing}
\bibfield{author}{\bibinfo{person}{Arunesh Sinha} {and}
  \bibinfo{person}{Michael~P Wellman}.} \bibinfo{year}{2019}\natexlab{}.
\newblock \showarticletitle{Incentivizing Collaboration in a Competition}. In
  \bibinfo{booktitle}{\emph{Proceedings of the 18th International Conference on
  Autonomous Agents and MultiAgent Systems}}. International Foundation for
  Autonomous Agents and Multiagent Systems, \bibinfo{pages}{556--564}.
\newblock


\bibitem[\protect\citeauthoryear{Van Den~Brink}{Van Den~Brink}{1997}]%
        {van1997axiomatization}
\bibfield{author}{\bibinfo{person}{Ren{\'e} Van Den~Brink}.}
  \bibinfo{year}{1997}\natexlab{}.
\newblock \showarticletitle{An axiomatization of the disjunctive permission
  value for games with a permission structure}.
\newblock \bibinfo{journal}{\emph{International Journal of Game Theory}}
  \bibinfo{volume}{26}, \bibinfo{number}{1} (\bibinfo{year}{1997}),
  \bibinfo{pages}{27--43}.
\newblock


\bibitem[\protect\citeauthoryear{van~den Brink and Gilles}{van~den Brink and
  Gilles}{1996}]%
        {van1996axiomatizations}
\bibfield{author}{\bibinfo{person}{Ren{\'e} van~den Brink} {and}
  \bibinfo{person}{Robert~P Gilles}.} \bibinfo{year}{1996}\natexlab{}.
\newblock \showarticletitle{Axiomatizations of the conjunctive permission value
  for games with permission structures}.
\newblock \bibinfo{journal}{\emph{Games and Economic Behavior}}
  \bibinfo{volume}{12}, \bibinfo{number}{1} (\bibinfo{year}{1996}),
  \bibinfo{pages}{113--126}.
\newblock


\bibitem[\protect\citeauthoryear{Winter}{Winter}{2002}]%
        {winter2002shapley}
\bibfield{author}{\bibinfo{person}{Eyal Winter}.}
  \bibinfo{year}{2002}\natexlab{}.
\newblock \showarticletitle{The shapley value}.
\newblock \bibinfo{journal}{\emph{Handbook of game theory with economic
  applications}}  \bibinfo{volume}{3} (\bibinfo{year}{2002}),
  \bibinfo{pages}{2025--2054}.
\newblock


\bibitem[\protect\citeauthoryear{Zhao, Li, Xu, Hao, and Jennings}{Zhao
  et~al\mbox{.}}{2018}]%
        {zhao2018selling}
\bibfield{author}{\bibinfo{person}{Dengji Zhao}, \bibinfo{person}{Bin Li},
  \bibinfo{person}{Junping Xu}, \bibinfo{person}{Dong Hao}, {and}
  \bibinfo{person}{Nicholas~R Jennings}.} \bibinfo{year}{2018}\natexlab{}.
\newblock \showarticletitle{Selling multiple items via social networks}. In
  \bibinfo{booktitle}{\emph{Proceedings of the 17th International Conference on
  Autonomous Agents and MultiAgent Systems}}. International Foundation for
  Autonomous Agents and Multiagent Systems, \bibinfo{pages}{68--76}.
\newblock


\end{thebibliography}


\end{document}